\shorttitle{Novel Algorithms for Sampling Abstract Simplicial Complexes} 
\begin{document}
\title{Novel Algorithms for Sampling Abstract Simplicial Complexes}

\authorone[University of Washington]{John Lombard}
\addressone{Department of Physics, University of Washington, Seattle, Washington, 98195, USA. Email Address: jml448@uw.edu}


\begin{abstract}

  We provide dual algorithms for sampling the space of abstract simplicial complexes on a fixed number of vertices. We develop a generative and descriptive sampler designed with heuristics to help balance the combinatorial multiplicities of the states and more widely sample across the space of nonisomorphic complexes. We provide a formula for the exact probabilities with which this algorithm will produce a requested labeled state, and compare with an existing benchmark. We also design a highly conductive local ergodic random walk with known transition probabilities. We characterize the autocorrelation of the walk, and numerically test it against our sampler to illustrate its efficacy.
\end{abstract}

\keywords{Abstract Simplicial Complexes ; Sampling Algorithms; MCMC; Random Walk}
\ams{05C85}{68Q87; 65C99}

\section{Introduction to the Space and Use of Abstract Simplicial Complexes}

  Whether used to model information theoretic phenomena like social networks or to study the combinatorial properties of fundamental structures in understanding emergent geometry, abstract simplicial complexes have a rich history of applications and are increasingly used in physics as powerful tools with extensive mathematical structures \cite{Maletic2009}. Unlike 1-dimensional graphs that only convey connectivity data between nodes, abstract simplicial complexes (ASCs) are generalizations that can allow representations of data through higher-dimensional geometric structures, such as surfaces and volumes in the form of combinatorial triangles and tetrahedra (and their higher dimensional equivalents). Informally, an ASC is the combinatorial abstraction of a geometric simplicial complex encoding the downward closure property. Unlike a geometric simplicial complex where the intersections of any two simplicies in the complex must also be a simplex in the complex that is in the union of the boundaries of the intersecting simplicies, ASCs only require that any boundary of a simplex is also a simplex in the complex. For example, the clique complex of a graph---the set of all complete subgraphs---is an abstract simplicial complex on the vertices. For a graphical picture of the differences of an ASC with a geometric simplicial complex when embedded into an ambient space, see Fig. \ref{fig:ASC}.

  \begin{figure}[H]
  \centering
  \subfloat{(a)\includegraphics[scale=4]{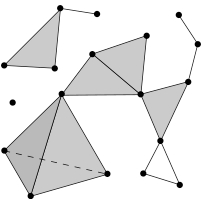}}(b)\subfloat{\includegraphics[scale=3]{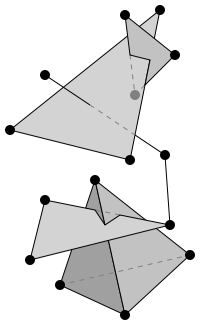}}
  \caption[A Simplicial Complex and a Clique Complex, both Embedded in $\mathbb{R}^{3}$]{A Simplicial Complex (a) and a Clique Complex (b), both Embedded in $\mathbb{R}^{3}$}\label{fig:ASC}
  \centering
  \end{figure}

  This structure allows one to model more complex association data that may not be captured by the limited degrees of freedom in a traditional graph or directed graph. Many models that involve these structures are generative, that is to say that one has a well defined way of prescribing a constructive growth paradigm and studying the complex emergent properties of the resulting states \cite{Wu2015}. However, statistical physics models on the space of simplicial complexes and ASCs with certain structures are becoming more popular \cite{Bianconi2015}. Although work continues to formally understand the topological properties of this space, finding descriptive algorithms with known probability distributions still requires concentrated effort---especially for models that would be computational feasible \cite{Costa2016}.

  \section{Challenges and Solutions in Sampling Abstract Complexes}

  Our goal is to introduce a new sampling algorithm that is both generative and descriptive on the ASC space $C_n$ with a fixed number of nodes $n$ that can then be used for sampling within algorithms that require random walks on this space, such as the oft used Metropolis Algorithms within Markov Chain Monte-Carlo methods employed throughout computational physics. Due to the combinatorial explosion, the cardinality of this space becomes very large very quickly with increasing $n$. Richard Dedekind in 1897 was the first to count the number of such configurations, as labeled ASCs are related to monotone boolean functions \cite{Dedekind1897}. Dedekind numbers, which count the number of ASCs with $m$ elements, are only known for $m\leq 8$; however, asymptotic formulas are also known for large $m$. For the purposes of sampling the unique (nonisomorphic) configurations in the space, we need to remove the labeling that introduces equivalence classes of states under label automorphisms. The inequivalent state cardinalities (and their asymptotic forms) are known only for $m \leq 7$, and grow to be on the order of $5\times10^6$ by $m=7$ \cite{Stephen2014}. We note that these numbers provide an upper bound on $|C_n|$, as they also include nodal removal. Nevertheless, efficiently sampling such a high dimensional space, especially given the equivalence classes, is a challenge. Since there is not yet a general way to know the cardinalities of the isomorphism classes of simplicial complexes on $n$ nodes, we can do little to tune our algorithm to accommodate this directly. Furthermore, designing either a reversible walk or a sampler with known transition probabilities on such a constrained space is an additional challenge that we face.

  In the sections to follow, we introduce two new algorithms for sampling on $C_n$. We design some basic guiding principles that we show analytically yield a non-local uncorrelated fully ergodic sampler that exhibits extremely strong sampling properties. We numerically illustrate its fast and wide sampling capabilities in comparison to a benchmark model. We also design a local ergodic random walk with known transition probabilities that, at the cost of autocorrelation, samples even more efficiently. Lastly, we characterize the autocorrelation of the walk, and numerically test it against our sampler.

\section{Notation and State Visualization}

  As there are a variety of ways to encode the data of a state $C\in C_n$, we take the opportunity to clarify for the reader the representation we will work with.

  \begin{definition}[Digraph Representation G]\label{eq:graphConditions}

    A state $C \in C_n$ is expressed in a leveled digraph representation $G[C]$ if each node $\alpha_d$ in the digraph at level $d$ represents a $(d-1)$-simplex in $C$, with $\alpha$ as a member of the indexing set on level $d$, $\alpha \in [1,|\{\alpha_d\}|]$.
    Defining the set $\{\alpha_1\}$ to the be `roots' of the graph with no incoming edges, the directed adjacency structure is constrained such that the following conditions are satisfied:

  \begin{enumerate}
    \item Directed edges exist only between levels $d \rightarrow (d+1)$
    \item The number of parents of node $\alpha_{d>1}$ must be $d$
    \item The number of roots corresponding to the union of the heads of all dipaths leading to $\alpha_d$ must be $d$
  \end{enumerate}

  \end{definition}

  The last condition guarantees simplicial closure, such that for each simplex, its boundary set are also nodes in the graph state with the proper completeness. There can be at most $\binom{n}{d}$ nodes in a level, corresponding to the ASC that is the d-skeleton of the complete clique complex on $n$ nodes. Similarly, the maximum level is $d=n$.

  This graph representation encodes an ASC uniquely up to $\alpha$ labeling. We denote the \emph{geometric} state as one in which the labeling has been removed. For an example of a labeled state with a canonical ordering, we illustrate in Fig. \ref{fig:ASCGraph} the complete state on 3 roots corresponding to a 2-simplex.

  \begin{figure}[H]
    \centerline{\includegraphics[scale=2]{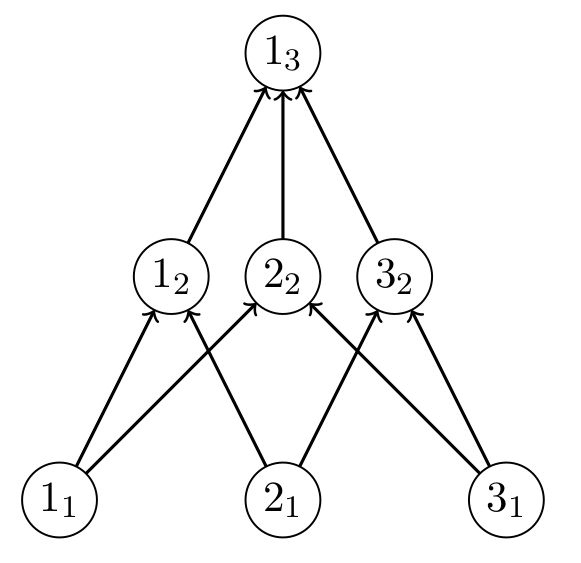}}
    \caption{A Representative Graph State Corresponding to a 2-simplex \label{fig:ASCGraph}}
  \end{figure}

  The convenience of this representation allows us to repackage the boundary closure constraints into the adjacency structure of this digraph, with the directed nature proving useful for easily identifying branching subgraphs affected by said closure.

  \begin{definition}[Boolean Map]
  Let the complete state on $n$ nodes be denoted $C^*_n$. A boolean representation of $C^*_n$ is given by an ALL-TRUE vector with length $\sum_{k=1}^n {n \choose k} = 2^n -1$, where the elements of the vector correspond to a level-canonical ordering of nodes in $G[C]$:  $\alpha_d = 1$ indicating existence of node $\alpha_d$ and $\alpha_d = 0$ indicating non-existence.

    Define $F\,:\,C^*_n \mapsto C$ as a boolean function that assigns $0\vee1$ to each $\alpha_d \in C^*_n$ such that the conditions in Def. \ref{eq:graphConditions} are satisfied.
  \end{definition}

  It is trivial to see that the space of all such functions $\mathbb{F} \ni F$ covers $C_n$. $F$ provides an arbitrary labeled ASC in the boolean representation that can be again visualized through the graph representation $G[C]$ and can be thought of as a mask on $C^*_n$. Isomorphic states are related by boolean functions equivalent up to subset permutations preserving the constraints.

  On $C_3$ for example, the masks $[1 1 1 1 1 0 0]$ and $[1 1 1 1 0 1 0]$ correspond to the same geometric state and can be shown to be equivalent through the allowed subset permutation on the elements corresponding to level $d=2$.

\section{Kahle's Inductive Construction}

  Kahle recently introduced a construction for generating random ASCs \cite{Kahle2014}. We describe some of its properties here, using our above notation for consistency.

  \begin{definition}[Kahle's Model]
  Kahle's multi-parameter model $\Delta(n,p_2,\ldots,p_n)$ builds an ASC inductively, starting at the edge set with $d = 2$. For every $\alpha_d$, include the simplex with probability $p_d$ provided it satisfies the boundary conditions in Def. \ref{eq:graphConditions}.
  \end{definition}

  The full state is built level by level, with constraints on the allowable set of nodes one can include at a given level due to the boundary existence requirements induced by the lower levels.

  Let $|\alpha'_d|$ indicate the number of included simplicies at level $d$ and $|\alpha^*_d|$ indicate the number of possible simplicies given the $(d-1)$ structure:
  \begin{equation*}
    |\alpha'_d|\leq|\alpha^*_d|\leq {n \choose d} \, .
    \end{equation*}

    A labeled state $G[C]$ is generated with probability $P_\Delta$ given by the following:
    \begin{equation*}
      P_\Delta (C) = \prod_{d=2}^{n} {p_d}^{|\alpha'_d|} (1-p_d)^{|\alpha^*_d|-|\alpha'_d|} \, .
    \end{equation*}

  As shown by Zuev \emph{et al.}, Kahle's model is an Exponential Random Simplicial Complex, implying that it generates a maximum entropy ensemble for an expected number of simplicies in the skeletal structures (directly constrained by the probability parameters) \cite{Zuev2015}.

  We note that the probability of achieving a particular state decreases as a binomial power in the number of total nodes in $G[C]$. Even under a nonuniform probability weighting of the levels, it can be easily seen that the combinatorial multiplicities of nodes in each level create a sampling that is highly peaked around states with a given maximum level for large $n$---either one that terminates early at the lower levels leaving no higher structures, one that does the opposite, or one that samples toward the `half-graph' state with $\approx \lceil n/2 \rceil$ levels in the case when we take the probabilities to be coin flips. Precise fine tuning would be needed to allow for sampling across a stretch of widely differing geometries, and the power behavior for finding a particular state will still not be mitigated. Additionally, the isomorphism classes of geometric states will be sampled from with additional probability factors based on their sizes. As the number density of labeled states concentrates toward those that terminate at the central level, we will take the model $\Delta(n,\frac{1}{2},\ldots,\frac{1}{2})\equiv \Delta_\frac{1}{2}$ to benchmark against. Such an algorithm has a probability lower bound at the complete state as follows:

  \begin{equation*}
    \tilde{P}_{\frac{1}{2}} \equiv \min_{C\in C_n} P_{\frac{1}{2}}(C) = P_{\frac{1}{2}}(C^*_n) = (\frac{1}{2}) ^ {\sum_{d=2}^n {n\choose d}} = (\frac{1}{2}) ^ {2^n -n -1} \,.
  \end{equation*}

  We note Kahle's construction was never claimed to be a fast and broad sampler on $C_n$. However, from the class of both descriptive and generative algorithms, and as a producer of a maximum entropy ensemble, it is an incredibly simple and natural inductive construction that we feel would serve as a reasonable baseline to compare against our random sampler on this space with the goal of rounding small probability sets in mind.

\section{The Balanced Algorithm}\label{sec:balancedAlg}

  Our goal is to sample across geometrically inequivalent states with better mixing than the $\Delta_\frac{1}{2}$ model. To this end, we define three key properties that we wish our model to satisfy as heuristics that we intuitively suggest would promote more rapid and broad sampling.

  \begin{enumerate} \label{balancedConditions}
    \item \label{balancedConditions1} Any isolated node such that $|\alpha^*_d| = 1$ should be given a probability of appearance of $p_d = \frac{1}{2}$. At this level in the induction, there are only two possible states that can be selected as the rest of the structure is already fixed. Each state should be given equal probability, as from the vantage of the current step in the algorithm, there is no differentiating property of either state that would induce a bias in the probability. For example, the highest dimension simplex should always have $p_n = \frac{1}{2}$.
    \item \label{balancedConditions2} The power law behavior of binomials in the probabilities should be avoided for individual states, which may also aid the associated issue in over-selecting multiple isomorphic states.
    \item \label{balancedConditions3} The completely disconnected state on $n$ nodes, $C^o_n$, should have the same probability of occurrence as $C^*_n$. This heuristic aims to re-balance the combinatorial effects of the intimate coupling between nodes at different levels due to simplicial closure, since not including any nodes at $d=2$ generates $C^o_n$, while in a construction like $\Delta$, all nodes in $G[C]$ must be independently kept to generate $C^*_n$, regardless of what probabilities are assigned to each level or even each individual simplex.
  \end{enumerate}

  To accomplish this, we first note that we will work inversely from Kahle's inductive constructive model and instead consider an equivalent inductive destructive model. Instead of starting from $C^o_n$, we start from $C^*_n$ and remove nodes starting at $d=2$ and work upwards in level. This is equivalent to sampling on the space $\mathbb{F}$, inductively building the boolean mask starting from the all-ones vector. This is computationally easier, as instead of checking the complicated closure conditions at each node we would like to place, we only have to solve for the complete graph state once (which involves finding all complete subgraphs on $n$ nodes, the NP-complete clique problem), save this state to disk, and reference it at will. To retain the simpliciality, upon removing node $\alpha_d$, one simply removes the unique directed tree associated with $\alpha_d$ as a starting node, which is a linear-time computation. In practice, this amounts to inductively applying a logical AND between the active masking function $F$ and the logical vector $\text{NOT}[\text{IN TREE}]$ for the removed head node.

  \begin{theorem}
    Let $\vec{P_d} = [P_{d_0},P_{d_1},\ldots,P_{d_{{n \choose d}-\hat{d} }}]$ be a probability vector such that $\|\vec{P}_d\|_1 = 1$ with $P_{d_i}$ denoting the probability that $i$ nodes are chosen uniformly at random and removed from level $d$, and $\hat{d}$ indicating the number of nodes already removed from level $d$ due to directed tree pruning from lower level removals.
    \begin{eqnarray}\label{eq:balancedProbs}
      P_{d_{i\neq 0}} &\equiv& P_d = \frac{1}{1+ \sum_{k=d}^n { {n \choose k} - \hat{k} } } \\
      P_{d_0} &=& 1 - ({n \choose d} - \hat{d}) P_d \nonumber
    \end{eqnarray}
    satisfies all properties of conditions \ref{balancedConditions1}, \ref{balancedConditions2} and \ref{balancedConditions3}.
  \end{theorem}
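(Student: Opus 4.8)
The plan is to treat the three heuristic conditions separately, since they are of different character: conditions \ref{balancedConditions1} and \ref{balancedConditions3} reduce to exact combinatorial identities, while condition \ref{balancedConditions2} is qualitative. I would first dispose of normalization, which holds by construction: $P_{d_0}$ is \emph{defined} as $1-({n\choose d}-\hat d)P_d$ and the other ${n\choose d}-\hat d$ entries each equal $P_d$, so $\|\vec P_d\|_1=1$ automatically.

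For condition \ref{balancedConditions1} the crux is a structural observation about Def. \ref{eq:graphConditions}: if only one node survives at level $d$, i.e. ${n\choose d}-\hat d=1$, then no node can survive at any level $k>d$. Indeed a level-$k$ node is a $(k-1)$-simplex whose boundary contains ${k\choose d}\ge{d+1\choose d}=d+1\ge 2$ faces at level $d$, all required to be present by simplicial closure, and a single surviving level-$d$ node cannot supply them. Hence ${n\choose k}-\hat k=0$ for every $k>d$, the denominator sum collapses to $\sum_{k=d}^n({n\choose k}-\hat k)=1$, and $P_d=\tfrac{1}{1+1}=\tfrac12$; the advertised case $p_n=\tfrac12$ is the instance $d=n$.

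Condition \ref{balancedConditions3} is the main calculation. Reaching $C^*_n$ forces zero removals at every level, so $\hat d=0$ throughout and $P(C^*_n)=\prod_{d=2}^n P_{d_0}$. Setting $S_d=\sum_{k=d}^n{n\choose k}$ and using $S_d-{n\choose d}=S_{d+1}$, each factor becomes $P_{d_0}=\tfrac{1+S_{d+1}}{1+S_d}$, so the product telescopes to $\tfrac{1+S_{n+1}}{1+S_2}=\tfrac{1}{2^n-n}$ (as $S_{n+1}=0$, $S_2=2^n-n-1$). On the other hand $C^o_n$ is reached by deleting all ${n\choose 2}$ edges at level $d=2$ in one step, after which tree-pruning removes every higher node, giving $P(C^o_n)=P_{2_{{n\choose 2}}}=P_2=\tfrac{1}{1+S_2}=\tfrac{1}{2^n-n}$, matching $P(C^*_n)$ exactly.

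Condition \ref{balancedConditions2} is where I expect the real obstacle, not through depth but through imprecision: one must first fix a formalization of ``avoiding the power-law behavior of binomials.'' The natural reading is that the $i$-independence of $P_{d_i}$ in \eqref{eq:balancedProbs} suppresses the weight ${M\choose i}p^i(1-p)^{M-i}$ that concentrates Kahle's per-node product, replacing it by a product of only $n-1$ per-level factors; the extreme states then carry probability exactly $\tfrac{1}{2^n-n}$, inverse-polynomial rather than exponential in the node count $2^n-n-1$, versus Kahle's $\tilde P_{\frac12}=2^{-(2^n-n-1)}$. A secondary subtlety worth flagging is that under the destructive dynamics the $\hat d$ are path-dependent, so a generic state's probability is a sum over removal histories; the theorem escapes this only because $C^*_n$, $C^o_n$, and the isolated-node configuration each admit an essentially unique, trivially-pruned history.
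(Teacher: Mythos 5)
Your proposal is correct and follows essentially the same route as the paper: normalization by construction, the collapse of the denominator sum to $2$ for condition \ref{balancedConditions1}, the telescoping product $\prod_{d}\frac{1+S_{d+1}}{1+S_d}=\frac{1}{2^n-n}$ matched against $P_2$ for condition \ref{balancedConditions3}, and a qualitative reading of condition \ref{balancedConditions2} as trading exponential per-node suppression for per-level polynomial factors. Your structural justification that a lone surviving level-$d$ node forces $\hat k={n\choose k}$ for all $k>d$ actually fills in a step the paper merely asserts ``can be seen from the definitions.''
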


  \begin{proof}

    We first note that $0 < P_d \leq \frac{1}{2} \,\forall\, d$, as the total node set is positive, finite, and the maximum is achieved in condition \ref{balancedConditions1} as proven below. Additionally, $\hat{d}$ is defined such that ${n \choose d} - \hat{d} \in \mathbb{N}$. We need to show that $0 < P_{d_0} < 1$ to conclude that this is a valid probability vector element.

    We only seek to show that $({n \choose d} - \hat{d}) P_d < 1$, as we already know this quantity is strictly positive due to above arguments. It should be clear that $P_d$ is inversely proportional to the total number of nodes left in the state $G[C]$ at step $d$ in the inductive construction ($+1$). The combinatorial prefactor is simply the total number of nodes remaining on level $d$, which must be less than or equal to the total number of nodes in the state. Hence, our claim is justified.

    Lastly, we can safely conclude that $\|\vec{P}_d\|_1 = 1$ by our construction of $P_{d_0} = 1 - ({n \choose d} - \hat{d}) P_d$.

    To show that this distribution satisfies the condition \ref{balancedConditions1}, it can be seen from the definitions that
    \[ |\alpha^*_d| = 1 \Leftrightarrow \hat{k} = \begin{cases}
           {n \choose d} - 1 & k=d  \\
           {n \choose k} & k>d
       \end{cases} \,.
    \]
    Hence,
    \begin{eqnarray*}
      P_d|_{|\alpha^*_d| = 1} &=& \frac{1}{1+{n \choose d} - {n \choose d} + 1 + \sum_{k=d+1}^n {(  {n \choose k} - {n \choose k} } ) } \nonumber \\
          &=& \frac{1}{2} \, ; \\
          P_{d_0}&=& 1- P_d = \frac{1}{2} \,. \nonumber
    \end{eqnarray*}
    Condition \ref{balancedConditions2} is satisfied by algorithmic construction. In choosing groups of $i$ nodes uniformly at random to remove from level $d$, we trade the power-binomial behavior in the probabilities that grow with the number of total nodes in $G[C]$ for a polynomial-binomial behavior that grows with the number of \emph{levels} instead. Additionally, the $\Delta$ model will always pick out a specific labeled $G[C]$ insensitive to the number of isomorphic reachable graphs. In the balanced model, we select from a class of graphs with a certain number of simplicial elements. Although there can also be many such graphs that are not isomorphic but have the same number of elements of given dimensions, we sample the number of elements per level uniformly instead of with product probabilities, giving a key advantage in sets of small probability measure as will be seen exactly in the case of $n=3$ shown in Section \ref{sec:properties}.

    Satisfying condition \ref{balancedConditions3} requires that the removal of all nodes at the edge level have the same probability as removing no nodes at any level:
   \begin{equation*}
     P_2 = \prod_{d=2}^n P_{d_0} \,.
   \end{equation*}
   On the left-hand side,
   \begin{eqnarray}\label{eq:3LHS}
    P_2 &=& \frac{1}{1+ \sum_{k=2}^n { {n \choose k} - \hat{k} } } |_{\hat{k}=0} \nonumber \\
     &=& \frac{1}{1+ \sum_{k=2}^n { {n \choose k} } } \\
     &=& \frac{1}{2^n -n} \nonumber \, .
  \end{eqnarray}
  On the right-hand side,
  \begin{eqnarray}\label{eq:3RHS}
   \prod_{d=2}^n P_{d_0} &=& \prod_{d=2}^n (1 - (\binom{n}{d}-\hat{d})P_d )|_{\hat{d}=0} \nonumber \\
    &=& \prod_{d=2}^n (1 - \frac{\binom{n}{d} } {1+ \sum_{k=d}^n  {n \choose k}} ) \\
    &=& \prod_{d=2}^n \frac{1 + \sum_{k=d}^n \binom{n}{k} - \binom{n}{d} } {1+ \sum_{k=d}^n  {n \choose k}} \nonumber \\
    &=& \prod_{d=2}^n \frac{1 + \sum_{k=d+1}^n \binom{n}{k}} {1+ \sum_{k=d}^n  {n \choose k}} \nonumber \\
    &=& \frac{1} {1+ \sum_{k=2}^n  {n \choose k}} \nonumber \\
    &=& \frac{1}{2^n -n} \nonumber \, .
 \end{eqnarray}
  Comparing Eq. \ref{eq:3LHS} and \ref{eq:3RHS} demonstrates equality.

\end{proof}

  We mention that the existence of such a solution to these constraints is very nontrivial. For example, the balancing condition \ref{balancedConditions3} can be shown to have no solution for the $\Delta_{\frac{1}{2}}$ construction for $n>2$ as equal probability of removal and acceptance would clearly require a solution to an equation of the form
  \begin{eqnarray*}
    x &=& x y  \\
    \text{s.t.}&\quad& 0 < \{x,y\} < 1 \nonumber \,.
  \end{eqnarray*}
  Since $n=2$ doesn't admit more than one probability level (equivalently let $y=1$), the conditions admit the trivial solution $x=\frac{1}{2}$.

  For any constant probability model $\Delta_x$ on $n$ roots enforcing the balancing condition \ref{balancedConditions3} and condition \ref{balancedConditions1} requires the probabilities to be roots of polynomials of the form
  \begin{eqnarray*}
    x^{\frac{n(n-1)}{2}} &=& \frac{1}{2}(1-x)^{2^n-n-2} \\
    \text{s.t.}&\quad& 0 < x < 1 \nonumber \,.
  \end{eqnarray*}
   The computer algebra package Mathematica suggests that this equation does not have any rational solutions for $x$ with $n>2$, indicating that there is likely no natural combinatorial factor that can be attributed to the probability weighting for this model, and relaxing condition \ref{balancedConditions1} does not help.

   For a generic $\Delta(n,p_2,\ldots,p_n)$ model, our constraints require parameters that satisfy the following equation:
   \begin{eqnarray*}
     p_2^{\binom{n}{2}} &=& \frac{1}{2} \prod_{d=2}^{n-1} (1-p_d)^{\binom{n}{d}} \\
     \text{s.t.}&\quad& 0 < p_d < 1 \nonumber \,.
   \end{eqnarray*}

   In the generic case with independent level probabilities, rational solutions only appear to exist if we remove condition \ref{balancedConditions1}; however, this may lead to an large imbalance in the state probabilities for states that are otherwise inductively identical---taking us further from our goal of uniformly sampling the geometric states. It is clear that although possible in theory to balance this algorithm, it requires finding numerical roots at each order and tuning the probabilities to best counteract the power behavior in the sampling, unlike the version we have presented that has closed-form analytic balancing and naturally handles the power structure.

  We conclude this section with the probability of finding a given labeled state using this algorithm. As mentioned, this algorithm samples from classes of complexes with certain numbers of objects per skeletal level. In order to relate these probabilities to a specific geometric state, one must know how these classes decompose into nonisomorphic graphs, as well as the relative sizes of the equivalence classes, introducing an additional combinatorial factor.

  Let the set of all graph isomorphisms between representations $G$ of a geometric state $C$ be denoted $\text{ISO}(G[C])$ such that the cardinality of this set gives the number of equivalent ways of representing $C$ under Def. \ref{eq:graphConditions}.

  At each inductive step, let $i_d$ nodes be removed from level $d$ out of the total number of available nodes.

  The fraction given by the number of labeled ways the selection can be made, weighted by the number of equivalent states at that level, yields the leveled combinatorial factor. Multiplying these factors over the full induction yields the resulting combinatorial factor $\xi(C)$ for achieving a particular geometric state:

  \begin{equation*}
    \xi(C) = |\text{ISO}(G[C])| \prod_{d=2}^n \frac{1} { \binom{\binom{n}{d} - \hat{d}}{i_d}   } \, .
  \end{equation*}
   However, since $|\text{ISO}(G[C])|$ is not known in advance, we can only compute probabilities analytically for labeled states as this breaks the symmetry factor. Thus, the combinatorial factor becomes
 \begin{equation*}
    \xi_L(C) = \prod_{d=2}^n \frac{1} { \binom{\binom{n}{d} - \hat{d}}{i_d}   }  \, .
 \end{equation*}

  It is this quantity that we will use in our comparisons to the $\Delta$ model, as they both consider specific labeled states. In practice, the geometric probabilities are larger, with the labeled probabilities providing a lower bound.

  Let $\{j\}$ be a boolean sequence representing whether any nodes were masked from $C^*_n$, with $j_d \equiv \{j\}_d = 0$ as an indicator that no nodes were removed from level $d$. In terms of our boolean function $F$, the elements correspond to a $\text{NOT}[\text{ALL}[F_d]]$ operation over the level subsets $F_d \subset F$.
  The probability of finding a labeled state is given by the following expression:

  \begin{equation*}
    P(C) = \xi_L(C) \prod_{d=2}^n (P^d)^{\delta^1_{j_d}} (P^d_0)^{\delta^0_{j_d}} \,,
  \end{equation*}
  where $\delta^a_b$ is the Kronecker delta.

\section{Properties of the Balanced Algorithm and Simulation Results} \label{sec:properties}

  This algorithm samples across a weighted space of paths for inductively building a given state, as opposed to building a specific state itself. In the case where each such path yields a unique state up to relabeling, this algorithm will produce the uniform distribution on the space of complexes. Such a condition is only true for $n=\{2,3\}$ where $\xi(C)=1\,\forall \, C\in C_{\{2,3\}}$, and is illustrated in a direct comparison with the $\Delta_{\frac{1}{2}}$ benchmark in Fig. \ref{fig:uniform}. This graph bins the multiplicities for which each geometric state was sampled, subtracted by the mean multiplicity to give residuals, and normalized by the total number of samples. The bins themselves do not match to the same geometric state between the two algorithms, but map to the first encountered representative of a given state. One can clearly see the uniform sampling from the balanced algorithm, although given the number of total samples, both algorithms find all $5$ geometric states. All simulations were performed using MATLAB.

    \begin{figure}[H]
      \centerline{\includegraphics[scale=3.1]{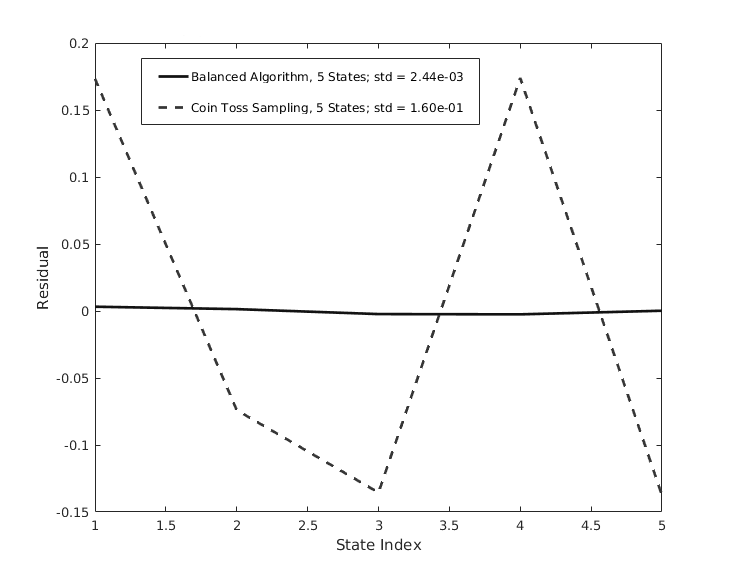}}
      \caption{Multiplicities Residuals of Unique Geometric States on 10000 Samples Drawn From $C_3$, Linearly Interpolated \label{fig:uniform}}
    \end{figure}

  However, for $n=4$ and higher, there exist nonisomorphic graphs with the same number of simplicial elements in each skeleton. This introduces a nonuniform combinatorial factor that is not possible to account for at the time of writing due to the fact that there is no analytic algorithm for predicting the number of such inequivalent graphs and their combinatorial multiplicities. Of course, since we can explicitly compute the probabilities for generating a labeled state, we mention that this sampler can be equipped with a Metropolis filter to re-weight the probabilities to produce a uniform sampling on labeled states.

  We now examine the raw probabilities for sampling a unique labeled state. Directly comparing the minimal probability in the $\Delta_{\frac{1}{2}}$ model with the equivalent complete state in the balanced model indicates that this state has a much greater probability of occurrence:

  \begin{equation*}
    \frac{1}{2^n -n} > (\frac{1}{2}) ^ {2^n -n -1} \,\,\forall \,\, n > 2 \, .
  \end{equation*}

  To indicate whether the new algorithm has balanced the probabilities at large and removed sets of extremely suppressed measure would require looking at the minimal probability bound for this algorithm and comparing it to $C^*_n$ as generated from $\Delta_{\frac{1}{2}}$. Here, we must use the labeled combinatorial factor $\xi_L$ for adequate comparison. Due to the balancing, the probabilities are minimized toward the half-graph state, as this maximizes the binomial coefficients at each level with many combinatorial possibilities equivalent to the removal of certain numbers of nodes. As we would like a lower bound, we set $\hat{k}=0 \, \forall \, k$. Even though we are removing approximately half of the nodes at each level, to maximize the binomial contribution, maintaining the full combinatorial degree of each level will further decrease the probabilities.

  In total, this gives an estimate for a lower bound of the following form:

  With
       \[ E(x) = \begin{cases}
             \frac{x}{2} & \mod(x,2) = 0  \\
             \frac{x+1}{2} & \mod(x,2) = 1
         \end{cases} \,,
      \]
  \begin{equation*}
    \tilde{P} = \min_{C\in C_n} P(C) \approx \prod_{d=2}^{E(n)} \frac{1}{ \binom{\binom{n}{d}}{E(\binom{n}{d})}  \frac{1}{1+\sum_{k=d}^n \binom{n}{k}} } \,.
  \end{equation*}

  Numerical analysis confirms that $\tilde{P}_{\frac{1}{2}} \leq \tilde{P} $ for reasonable values of $n$ before they become numerically unstable due to the combinatorial explosion, as illustrated in Fig. \ref{fig:probAnalysis}. It is immediately apparent that this algorithm has a much stronger probability behavior and actively works against the suppression found in a product model.

  \begin{figure}[H]
    \centerline{\includegraphics[scale=3]{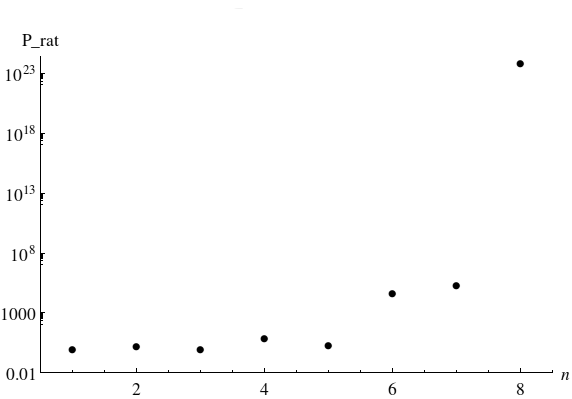}}
    \caption{A Log-Plot of the Ratio $\frac{\tilde{P}}{\tilde{P}_{\frac{1}{2}}}$ as a Function of the Number of Roots $n$ \label{fig:probAnalysis}}
  \end{figure}

  Lastly, we advertised that the combinatorial balancing would allow for a broader access of states. Below we provide some simulation results to illustrate this property. Fig. \ref{fig:numUniqueHit} shows the number of unique geometric states encountered while sampling $C_6$ for a variety of sampling lengths. We can see that the balanced algorithm samples states at a faster rate than the $\Delta_{\frac{1}{2}}$ benchmark test. This is again demonstrated in Fig. \ref{fig:multiplicities}, where 50000 samples were drawn on $C_5$. The balanced algorithm has appeared to converge, while the $\Delta_{\frac{1}{2}}$ benchmark has yet to find all of the inequivalent states.  Naturally, the states with higher probability of being encountered were among the first to be sampled, explaining the correlation between the large initial fluctuations in the two algorithms given the first-representative binning process. However, the multiplicity fluctuations are much smaller for the balanced algorithm, indicating that the goal of heuristically rounding the space of state probabilities has been preliminarily accomplished by this algorithm.

  \begin{figure}[H]
    \centerline{\includegraphics[scale=3.5]{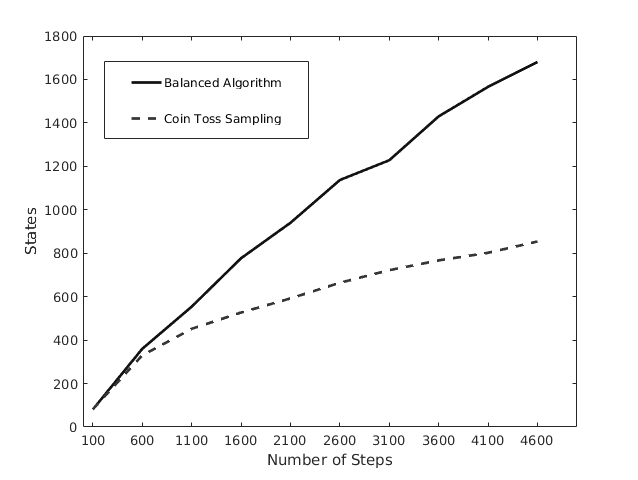}}
    \caption{Number of Unique Geometric States While Sampling $C_6$ as a Function of the Sample Size, Linearly Interpolated \label{fig:numUniqueHit}}
  \end{figure}
  \begin{figure}[H]
    \centerline{\includegraphics[scale=3.5]{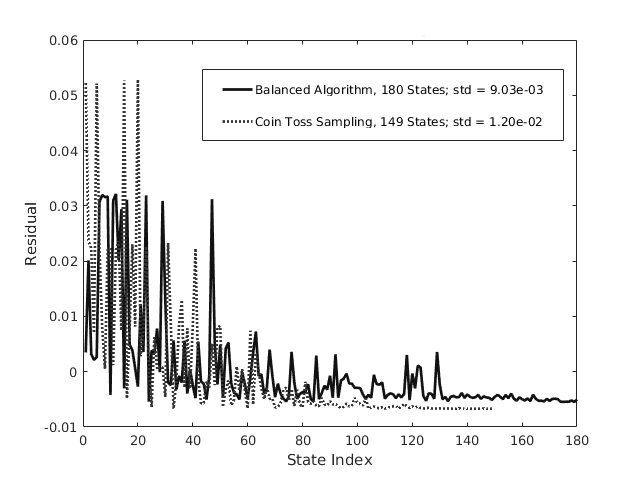}}
    \caption{Multiplicity Residuals of Unique Geometric States on 50000 Samples on $C_5$, Linearly Interpolated \label{fig:multiplicities}}
  \end{figure}

\section{Local Random Walks}

The algorithm introduced in Sec. \ref{sec:balancedAlg} can be naturally used to perform an ergodic walk on the ASC space. We can jump from any state to another without a barrier as there is no dependence on the current state to restrict the space of next available states. This is a desirable feature from the perspective of sampling on the full space, as there are no regions of low conductance in the state space where our `walk' can become trapped. However, when a Metropolis filter is utilized, the fact that this sampler can introduce transitions between arbitrary configurations may be a detriment to the acceptance rate if the filter is not naturally tuned to the intrinsic sampling probabilities of the walk. A local random walk between nearby configurations would be more likely to permit an acceptance with respect to a Metropolis filter, and as a result, may sample the full space faster. We would still like to be able to make this local random walk reversible, however, to ensure the Markov property. However, on the ASC space, the closure constraints make constructing a reversible local random walk very difficult. It is even still difficult to find \emph{any} local random walk where one can compute forward and backward transition probabilities in order to force the walk to be reversible with respect to an additional Metropolis filter. In the following sections, we illustrate one example of a local random walk in the ASC space, and compare its properties to our global sampler.

\section{A Local Random Walk on Abstract Simplicial Complexes}

Our goal is to create a local random walk that has more favorable acceptance ratios for a Metropolis filter that is in some way sensitive to the topological structure of the current ASC. To that end, we define `local' with respect to a new metric on the ASC space that is restricted to a unidirectional walk away from a given state.

\begin{definition}[Unconstrained Nodes]
  Given an ASC in the graph representation $G[C]$, we define an `unconstrained node' \emph{u} as one that can be freely added or removed without requiring or destroying additional containment structure.

  An unconstrained node is `removable' if is has no children and is itself not a root (as the we hold the roots fixed in $C_n$).

  An unconstrained node is `addable' if it is a member of $C^*_n \setminus C$ (the complete graph with the current state excluded) that has all of its parents in $C$.
\end{definition}

We work with unconstrained nodes for two reasons. Foremost, we would like to have a walk that admits a range of local movement as opposed to simply a one step nearest-neighbor walk on individual simplices. If we admit moves that can add or remove an arbitrary number of nodes within the state space, one needs to worry about the closure constraints. These constraints will make it very difficult to generate a walk that has computable probabilities for reversibility, as the number of admissible additions or removals would be dynamic with each sub-step within the same transition move, and there can be multiple paths with different probabilities that could lead to the same state. We want to restrict down this capability, but still admit larger jumps through the state space. Hence, we work with the space of unconstrained nodes as pure additions or removals within this space will prevent such issues from arising and admit a walk with computable probabilities. The restriction that nodes are only added or removed in a single step additionally guarantees that we do not have any closed loops within our multi-step walk for a given transition.

Our notion of local distance is therefore the number of added or removed nodes in a given transition step, actioned by a binary flip on the boolean function representing $C$.

The algorithm mimics an exponential ball walk with respect to this distance measure. First, we compute the total number of nodes one could maximally flip on the state space. From this set, we establish a normalized probability function based on an exponential decrease in probability for larger numbers of binary flips. We decide to either add or remove nodes in a given transition. Once this choice is made, we compute all addable/removable nodes $U$ for the current configuration. We select a distance $\delta$ to move based on the fixed probability measure. If that distance takes the state outside of the state space or beyond the number of admissible adds/removes, then the algorithm resets until an admissible move is found---this is our rejection sampling step similar to a ball walk on the edge of the state space. Once a good distance is accepted, a uniformly random selection of those unconstrained nodes $u\in U$ have their entries flipped in the boolean representation. The forward and backward probabilities are symmetric with respect to the exponential distance weighting, as this is not dependent on the state itself. Therefore, a Metropolis filter would only need to account for the uniform selection step, producing a combinatorial factor of

\begin{equation}\label{eq:walkProbs}
  \frac{P_B}{P_F} = \frac{{|U_B| \choose \delta}}{{|U_F| \choose \delta}} \, .
\end{equation}

\section{Computational Results}

Since our sampler now has local correlations, it becomes necessary to characterize more carefully the efficiency of the random walk and breadth of sampling.
We present two extreme situations for the initial start of the walk: beginning at a corner of the state space, $C^*_n$, and beginning at a `central' state consisting of roughly half of the available simplices being activated. We examine both the multiplicity residuals as before, as well as the autocorrelation length.

To characterize the autocorrelation length, we use an initial convex sequence method that involves the greatest common minorant \cite{Geyer2011}. First, we implement a Metropolis filter utilizing Eq. \ref{eq:walkProbs} such that our samples can be expected to be i.i.d. To measure autocorrelation, we compute the signed displacement of a transition between two states $C$ and $C'$ as the difference in the sums of their boolean representations:

\begin{equation*}
  \delta = \|C\|_1 - \|C'\|_1 \,,
  \end{equation*}
where $|\delta|$ still corresponds to the number of binary flips between the two, as discussed in the algorithm. We then look at the cumulative sum of the time sequence of $\delta_i$ values for each step $i$ in the walk. This gives some sense of a $1$-dimensional projection of the random walk through the ASC space, making it a natural random variable to compute autocorrelations with.

Let an estimator for the sample mean on $s$ samples be denoted

\begin{equation*}
  \hat{\mu}_s = \frac{1}{s} \sum^n_{i = 1}{\delta_i} \,.
  \end{equation*}

A natural estimator for the auto-covariance function at lag $k$ is given by the following:

\begin{equation*}
  \hat{\gamma}_s = \frac{1}{s} \sum^{n-k}_{i = 1}{(\delta_i - \hat{\mu}_s)(\delta_{i+k} - \hat{\mu}_s)} \,.
  \end{equation*}

The greatest convex minorant at lag $k$,

\begin{equation*}
  \Gamma_k = \gamma_{2k} + \gamma_{2k+1},
\end{equation*}

is a strictly positive, decreasing, and convex function for a reversible Markov chain. Therefore, examining our estimator $\hat{\Gamma}_k$ for the point at which it becomes nonpositive indicates the lag where we encounter autocorrelation. Due to the dependence on twice the lag, our autocorrelation is related to $2k'$ when $\hat{\Gamma}_{k'} \leq 0$.

We can see in Fig. \ref{fig:randomWalkCenterAC} that the local random walk started from a central state, upon re-weighting with the Metropolis filter, produces autocorrelation out to about $16$ steps, and the walk has a natural rejection rate on the order of $50\%$.

\begin{figure}[H]
  \centerline{\includegraphics[scale=.5]{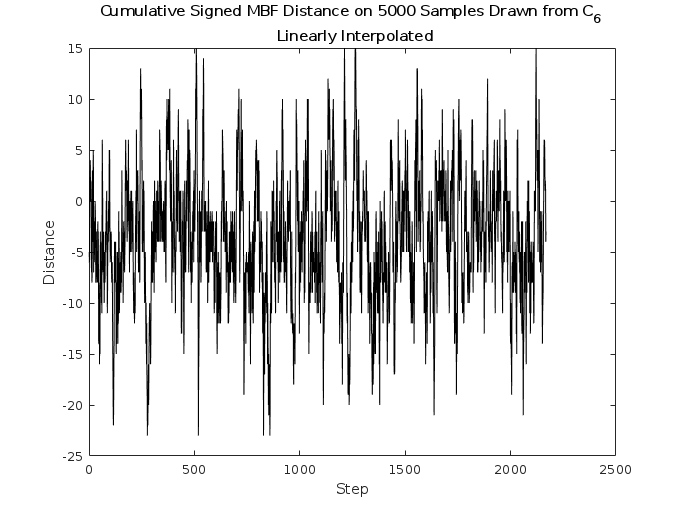}}
  \caption{Time History Observable Example Used to Compute Autocorrelation}
  \end{figure}
  \begin{figure}[H]
  \centerline{\includegraphics[scale=.5]{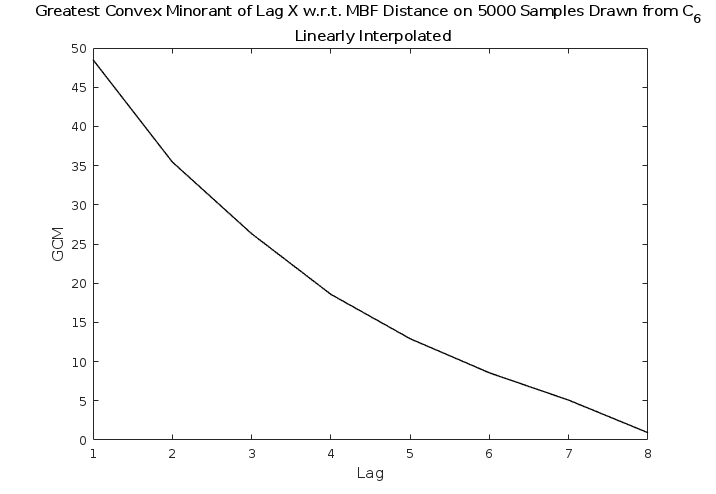}}
  \caption{Autocorrelation Statistics for a Random Walk on $C_6$ with $5000$ Steps, Starting from a Central State}
  \label{fig:randomWalkCenterAC}
\end{figure}

Fig. \ref{fig:randomWalkCornerAC}, produced starting from a corner state, tells not much of a different story. This indicates that the edges in the state space are not incredibly narrow, and that this random walk is good at working its way out of those corners. We see less than double the autocorrelation, which is not unexpected due to the time spent in the region of small state density. A burn-in process would reduce this down to toward the autocorrelation lengths found in the central case.

\begin{figure}[H]
  \centerline{\includegraphics[scale=.5]{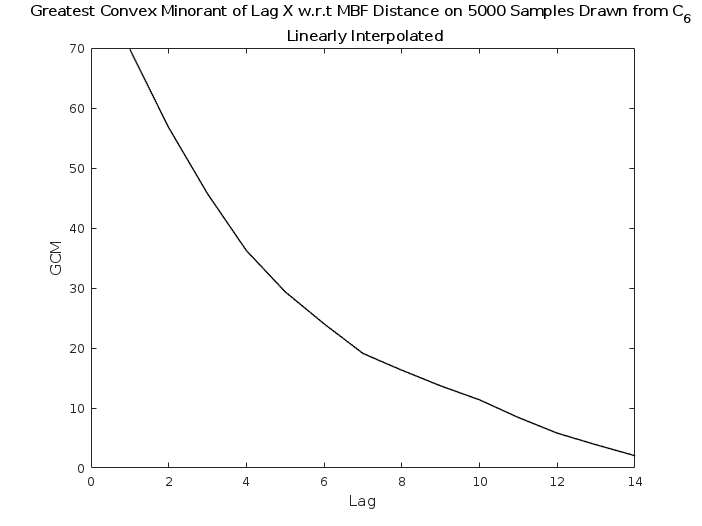}}
  \caption{Autocorrelation Statistics for a Random Walk on $C_6$ with $5000$ Steps, Starting from a Corner State}
  \label{fig:randomWalkCornerAC}
\end{figure}

We lastly compare the efficiency of all of these algorithms for sampling geometrically unique states. As seen in Figs. \ref{fig:StatesVsSteps} and \ref{fig:multiplicitiesAll}, the local random walk performs remarkably better, sampling more states with less accepted steps. This lends credence to the notion that the best sampler on this space would likely be a linear combination of the two Markov chains. Since such a construction still retains its theoretical properties, we can achieve the best of both algorithms by choosing to perform a local walk with some large probability to reap the rewards of the rapid sampling, while occasionally using the balanced sampler to avoid regions of narrow conductance bands and to promote ergodicity and large nonlocal transitions.

\begin{figure}[H]
  \centerline{\includegraphics[scale=.5]{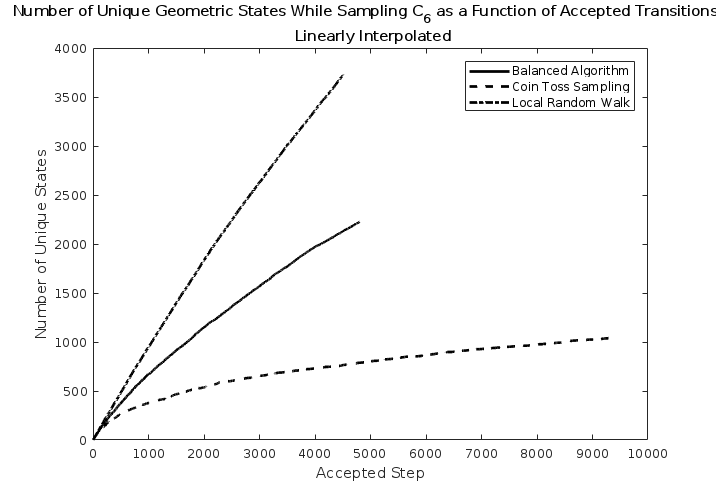}}
  \caption{A Comparison of the Unique Geometric States Sampled on $C_6$ as a Function of Accepted Transitions for All Three Samplers}
  \label{fig:StatesVsSteps}
\end{figure}
\begin{figure}[H]
  \centerline{\includegraphics[scale=.5]{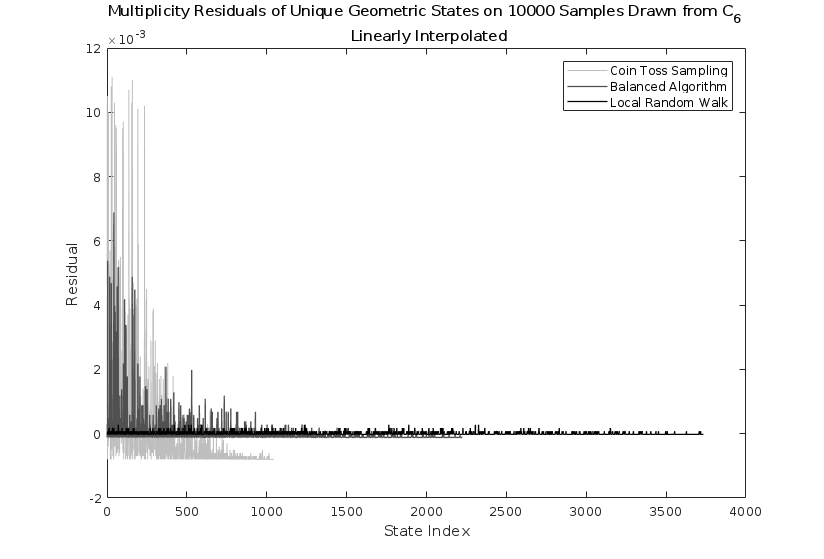}}
  \caption{A Comparison of the Multiplicity Residuals of Unique Geometric States Sampled on $C_6$ for All Three Samplers, Linearly Interpolated \label{fig:multiplicitiesAll}}
\end{figure}

\section{Discussion on the New Samplers}
As is the case with a wide variety of combinatorial spaces, it is often very difficult to develop a sampling procedure with transition probabilities that can \emph{a priori} sample such that the uniform distribution is the stationary distribution without the use of a Metropolis filter. In the case of abstract simplicial complexes, the unknown isomorphism classes of configurations make this problem seemingly intractable. We have introduced an algorithm that uses three simple principles to attempt to re-balance the sampling such that the algorithm more readily samples inequivalent configurations with a wide breadth across the space. Our analytical results show that this algorithm has a worst case lower-bound on state probabilities that is larger than the equivalent sampling through a uniformly weighted Kahle process, which we used as an unoptimized benchmark. Our simulations confirm that a direct comparison between the two algorithms favors the balanced algorithm when attempting to sample across the geometric space of states.

We have also discussed a local random walk that can be made reversible. The advantage of this walk is to increase the acceptance rates for a Metropolis filter when sampling nearby states as opposed to large jumps in the state space, and we have illustrated through simulation its efficiency in also sampling from a wide range of states in the ASC space. However, in some applications with Metropolis filters, this walk may be sensitive to trapping regions, as it is not able to explore any possible configuration in a single transition step. Thus, a combination of our local walk and the balanced sampler can be used to promote ergodicity and rapid sampling.

Future work toward finding a better generative algorithm for sampling across equivalence classes of large random abstract simplicial complexes while maintaining analytical control is necessary in order to begin to probe the very large space of states. With a variety of applications on the horizon, we anticipate this problem being approached from a broad range of perspectives, and we hope to have provided some insight through some practical, simple algorithms that accomplish the first steps toward this task.

\begin{acks}

  This work was supported in part by the University of Washington. We would like to thank Hariharan Narayanan and Stephen Sharpe for their help in revising this manuscript.

\end{acks}

\bibliographystyle{apt}
\bibliography{MyLibrary.bib}
%
%


\end{document}